\documentclass[12pt]{article}
\usepackage[utf8]{inputenc}

\title{On the Conjugacy Search Problem \\ in Extraspecial $p$-Groups}

\date{}

\author{Simran Tinani\thanks{This research is supported by armasuisse Science and Technology.}}

\usepackage{graphicx}
\usepackage{amsmath, amssymb}
\usepackage[utf8]{inputenc}
\usepackage{subcaption}
\usepackage{tabularx}
\usepackage{amsmath,amssymb,amsthm}
\usepackage[english]{babel}
\usepackage{mathtools}
\usepackage{algorithmic}
\usepackage{enumerate}
\usepackage{hyperref}
\usepackage[numbers]{natbib}
\usepackage[dvipsnames]{xcolor}
\usepackage{color}
\usepackage{comment}

\usepackage{hyperref}
\hypersetup{
  colorlinks   = true, 
  urlcolor     = blue, 
  linkcolor    = OliveGreen, 
  citecolor   = red 
}

\usepackage{tabularx}
\usepackage{booktabs}
\usepackage{makecell}
\usepackage{longtable}
\usepackage[ruled,vlined]{algorithm2e}
\usepackage{algorithmic}
\usepackage{listings}
\usepackage{etoolbox}
\usepackage{lipsum}
\usepackage[left =2.6cm, right = 2.6 cm]{geometry}

\usepackage{url}

\usepackage{amssymb}
\usepackage{amsfonts}
\usepackage{amsmath}
\usepackage{comment}
\usepackage{tabularx}
\usepackage{booktabs}
\usepackage{makecell}
\usepackage{longtable}
\usepackage{graphicx}
\usepackage{caption}
\usepackage{lipsum}
\usepackage{xcolor}
\usepackage{parskip}
\usepackage{caption}
\usepackage{floatrow}

\usepackage{hyperref, enumitem}

\newcommand{\Z}{\mathbb{Z}}

\DeclarePairedDelimiter\card{\lvert}{\rvert}
\newfloatcommand{capbtabbox}{table}[][\FBwidth]

\newtheorem{theorem}{Theorem}

\newtheorem{lemma}{Lemma}

\theoremstyle{definition}

\newtheorem{definition}{Definition}
\theoremstyle{remark}
\newtheorem{remark}{Remark}
\newtheorem{example}{Example}

\begin{document}
\maketitle

\begin{abstract}
     In the recently emerging field of group-based cryptography, the Conjugacy Search Problem (CSP) has gained traction as a non-commutative replacement of the Discrete Log Problem (DLP). The problem of finding a secure class of nonabelian groups for use as platforms is open and a subject of active research. This paper demonstrates a polynomial time solution of the CSP in an important class of nonabelian groups, the extraspecial $p$-groups. For this purpose, and as a further result, we provide a reduction of the CSP in certain types of central products. The consequences of our results are practically relevant for ruling out several groups as platforms, since several nonabelian groups are constructed by combining smaller groups by taking direct and central products.
\end{abstract}

\section{Introduction}

Among the search for a post-quantum framework for public key cryptography, it  has been proposed recently to use algorithmic problems over onabelian groups. We recall the Discrete Logarithm Problem (DLP), which forms the basis for a large majority of classical key exchange, authentication, and signature schemes:  Given a group $G$, an integer $N$, and group elements $g$ and $h=g^N$, find an exponent $n \in \Z$ such that $h=g^n$. On the other hand, the Conjugacy Search Problem (CSP) is described as follows. Given a group $G$ and elements $g, \ y$ and $h=y^{-1} g y$, the CSP requires the recovery of a conjugator $x \in G$ such that $h = x^{-1} gx$. The notation $g^x:= x^{-1}g x$ is often used to reflect this analogy. For background on group-based cryptography, the interested reader can refer to \cite{myasnikov2008group} and \cite{fine2011aspects}. 
 
The most well-known protocols constructed based on the CSP are by Anshel, Anshel and Goldfeld (AAG) \cite{an99}, and Ko-Lee \cite{ko2000new}. Both of these protocols propose the use of the Braid groups $B_N$ as platforms for key exchange. However, the braid groups have been shown to be vulnerable to a number of attacks \cite{braid1}, \cite{braid2}, \cite{tsaban2015polynomial} and are thus likely unsuitable. The problem of finding a secure class of groups for use as platforms for CSP-based protocols is open and a subject of active research. 

A finite group with order a power of a prime $p$ is called a $p$-group. The class of $p$-groups is vast, and not fully understood in its entirety. Among these, some interesting and well-studied subclasses are the special, extraspecial, and Miller $p$-groups. These classes of groups are overlapping; in fact, every extraspecial $p$-group is also special, and most well-known Miller groups are also special. For a general reference on $p$-groups, see \cite{LeedhamGreen2002TheSO}. 

Since $p$-groups constitute important classes of nonabelian groups, and often form building blocks for other nonabelian groups, it is worth examining the difficulty of the CSP in them. In fact, some authors have already proposed them as potential platforms for cryptography. For example, in \cite{new-auth}, authentication and signature schemes using the CSP were proposed, and general $p$-Miller groups were suggested as platforms.

A $p$-group $G$ is called extraspecial if its center $Z(G)$ is cyclic of order $p$, and the quotient $G/Z(G)$ is a non-trivial elementary abelian $p$-group. Every extraspecial $p$-group has order $p^{1+2n}$ and is a central product of $n$ extraspecial groups of order $p^3$. 

In this paper, we show that the CSP in an extraspecial $p$-group has a polynomial time solution, and provide an explicit algorithm, Algorithm \ref{csp-algo} for the solution. For this, we first solve the CSP in extraspecial groups of order $p^3$, and show an extension to all extraspecial $p$-groups (section \ref{extraspl}). Such an extension requires exploring CSPs in central products of groups, which we do in section \ref{central-sec}. More precisely, we will show in this section that an instance of the CSP in a central product $G=HK$ reduces to two separate instances of the CSP in $H$ and $K$, if $G$ satisfies a certain algorithmic condition, which we call efficient $C$-decomposability. We also provide examples of groups which possess this property, and show later on, in section \ref{extraspl} that it is possessed by all extraspecial $p$-groups.

 A direct consequence of the results in this paper is that nonabelian groups with extraspecial $p$-groups as direct or central components are unsuitable as cryptographic platforms. The results on central products also demonstrate that while considering any platform group for a CSP-based system, care must be taken to ensure that an efficient decomposition into a central product is not possible. This is practically significant for future work in group-based cryptography since several nonabelian groups, and in particular, $p$-groups, are constructed by combining smaller $p$-groups by taking direct, central, and semidirect products (see, for example, \cite{caranti2013}, \cite{curran}).

As a direct corollary of our results, we also have a polynomial time solution for the Conjugacy Decision Problem (CDP) (i.e. given elements $g,h \in G$, to determine if there exists $x\in G$ such that $x^{-1}gx = h$) in any extraspecial $p$-group.

Throughout, we denote by $p$ a prime number, $C_p$ the cyclic group of order $p$, and by $H \rtimes K$ a semidirect product of groups $H$ and $K$ (with $K$ acting on $H$ by automorphisms). The center of a group $G$ is denoted by $Z(G)$. For a subset $S$ and an element $x$ of a group $G$ we use the notations $xS := \{xz \mid z\in S\}$ and $Sx := \{zx \mid z\in S\}$. For any $x\in G$, denote by $C_x:=\{g^{-1}xg \mid g \in G\}$ the conjugacy class of $x$. For subsets $S_1$ and $S_2$, the product $S_1S_2$ denotes the set $\{s_1s_2 \mid s_1\in S_1, \ s_2 \in S_2\}$. An algorithm on a finite group $G$ will be said to be polynomial time if its time complexity is $\mathcal{O}(\log \card{G})$. 

\section{Central Products}\label{central-sec}

\begin{definition}
A group $G$ is said to be a central product of its subgroups $H$ and $K$ if every element $g \in G$ can be written as $hk$, with $h\in H, k \in K$ (i.e. $G=HK$), and we have $h k = kh \ \forall \ h\in H, \ k\in K$ (thus, $H \cap K \subseteq Z(G))$.
\end{definition}

We introduce the following property, which is relevant to central products and the CSP. 

\begin{definition}
A finite group $G$ is said to be efficiently $C$-decomposable if for any elements $h,  k, x, y \in G$ with $hC_x \cap kC_y \neq \emptyset$, an element of $hC_x \cap kC_y $ can be found in polynomial time. 
\end{definition}

\begin{theorem}
\label{central-thm}
     Let $G$ be an efficiently $C$-decomposable group and $H$ and $K$ be subgroups of $G$ such that $G$ is the central product of $H$ and $K$. Then, solving the CSP in $G$ is polynomial time reducible to solving two separate CSP's in $H$ and $K$. 
\end{theorem}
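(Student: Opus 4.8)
The plan is to exploit the way conjugation interacts with the central product structure. Write the given instance as the pair $(g,h)$ with $h=g^x$ for some unknown $x\in G$, and decompose $g=g_1g_2$ with $g_1\in H$ and $g_2\in K$ (I assume, as is immediate for an explicitly presented central product, that such a decomposition can be computed in polynomial time). First I would record the basic identity that conjugation splits across the factors: writing a candidate conjugator as $x=x_1x_2$ with $x_1\in H$, $x_2\in K$, and using that every element of $H$ commutes with every element of $K$, one obtains
\[
g^x=(x_1x_2)^{-1}(g_1g_2)(x_1x_2)=g_1^{x_1}g_2^{x_2},
\]
where $g_1^{x_1}\in H$ and $g_2^{x_2}\in K$. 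In particular the $G$-conjugacy class of $g_1$ coincides with its $H$-conjugacy class and lies entirely in $H$ (conjugation by the $K$-part acts trivially on it), and symmetrically for $g_2$ and $K$.

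The requirement $g^x=h$ therefore becomes: find $x_1\in H$ and $x_2\in K$ with $g_1^{x_1}g_2^{x_2}=h$. Setting $w:=g_1^{x_1}$, this asks for $w\in C_{g_1}$ together with $w^{-1}h=g_2^{x_2}\in C_{g_2}$. Since $w^{-1}h\in C_{g_2}$ is equivalent to $w\in hC_{g_2^{-1}}$ (using $C_{g_2}^{-1}=C_{g_2^{-1}}$), the whole problem reduces to producing a single element
\[
w\in C_{g_1}\cap hC_{g_2^{-1}}.
\]
This intersection is nonempty because the true conjugator $x$ yields $w=g_1^{x_1}$ with $w^{-1}h=g_2^{x_2}\in C_{g_2}$. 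Crucially, this is exactly an instance of the form required by efficient $C$-decomposability, namely the intersection of the two cosets $1\cdot C_{g_1}$ and $h\,C_{g_2^{-1}}$ of conjugacy classes; hence such a $w$ can be found in polynomial time.

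Having found $w$, I would finish by invoking the two CSP oracles. Since $w\in C_{g_1}\subseteq H$, solving the CSP in $H$ on the pair $(g_1,w)$ returns $x_1\in H$ with $g_1^{x_1}=w$; since $w^{-1}h\in C_{g_2}\subseteq K$, solving the CSP in $K$ on the pair $(g_2,w^{-1}h)$ returns $x_2\in K$ with $g_2^{x_2}=w^{-1}h$. Then $x:=x_1x_2$ satisfies $g^x=g_1^{x_1}g_2^{x_2}=w(w^{-1}h)=h$, as required, and a short check shows the construction is independent of which decomposition $g=g_1g_2$ was chosen, since replacing it by a central factor of $H\cap K$ only shifts $w$ by that same factor. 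All steps—the decomposition, the $C$-decomposability call, the two oracle calls, and one multiplication—run in polynomial time, which establishes the claimed reduction.

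The main obstacle, and the precise reason the hypothesis of efficient $C$-decomposability is needed, is the ambiguity of the factorization modulo $H\cap K\subseteq Z(G)$. One cannot simply decompose $h=h_1h_2$ and solve $g_1^{x_1}=h_1$ and $g_2^{x_2}=h_2$ independently, because $h_1,h_2$ are only defined up to a common central correction $z\in H\cap K$, and that correction must be chosen consistently for the two sub-instances to recombine into a genuine conjugator. Rephrasing the matching condition as the single intersection $C_{g_1}\cap hC_{g_2^{-1}}$ is exactly what converts this consistency requirement into the form handled by the oracle; verifying that any point of this intersection lifts to solutions of the two separate CSPs is the heart of the argument.
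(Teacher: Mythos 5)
Your proof is correct and follows essentially the same approach as the paper: both arguments split conjugation across the central factors, recast the matching condition as the non-emptiness of an intersection of two translated conjugacy classes, invoke efficient $C$-decomposability to produce a point of that intersection, and then solve two separate CSPs in $H$ and $K$. The only (cosmetic) difference is that the paper translates both classes, finding $t \in {h'}^{-1}C_{\tilde{h}} \cap k'C_{\tilde{k}^{-1}} \subseteq H \cap K$, while you translate only one, finding $w \in C_{g_1} \cap h\,C_{g_2^{-1}}$; the two formulations correspond under left multiplication by $h'$ (your $w$ is the paper's $h't$).
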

\begin{proof}
     Let $\Tilde{g} = \Tilde{h}\Tilde{k}$, $g' = h'k'$ be elements in $G$. Suppose that we want to solve the CSP for $\Tilde{g}$ and $g'$, i.e. to find an element $=hk$ such that $g^{-1} \Tilde{g} g = g'$. We have,
     \begin{align}\label{conj_central}
         & (hk)^{-1}(\Tilde{h}\Tilde{k})(hk) = k^{-1}h^{-1} (\Tilde{h}\Tilde{k}) hk  =(h^{-1}\Tilde{h} h)(k^{-1}\Tilde{k} k) \nonumber \\
         \implies  & (h^{-1}\Tilde{h} h)(k^{-1}\Tilde{k} k) = h'k' \nonumber \\
          \implies & {h'}^{-1}(h^{-1}\Tilde{h} h) = k'(k^{-1}\Tilde{k} k)^{-1} \in h'^{-1}C_{\Tilde{h}} \cap k'C_{\Tilde{k}^{-1}} \subseteq H\cap K
     \end{align}
   Note that ${h'}^{-1}(h^{-1}\Tilde{h} h) =k'(k^{-1}\Tilde{k} k)^{-1} \in {h'}^{-1}C_{\Tilde{h}} \cap k'C_{\Tilde{k}} \subseteq H \cap K$. By hypothesis, we can find, in polynomial time, an element $t \in h'^{-1}C_{\Tilde{h}} \cap k'C_{\Tilde{k}^{-1}}$.

  Now consider the following two separate instances of the CSP in $H$ and $K$.
    \begin{equation}\label{central-eqn}
        h^{-1}\Tilde{h} h = h't \in H \ \text{and} \ (k^{-1}\Tilde{k} k) = k' t^{-1} \in K
    \end{equation} 
    Suppose that we have solutions $h$ and $k$ of \eqref{central-eqn}. Then, for $g = hk$, we have $g^{-1} \Tilde{g} g = (hk)^{-1}(\Tilde{h}\Tilde{k})(hk) = (h^{-1}\Tilde{h} h)(k^{-1}\Tilde{k} k) = (h't)(k't^{-1}) = h'k'$. Thus, $g =hk$ is a solution to $g^{-1} \Tilde{g} g = g'$. We conclude that $\Tilde{g}$ and $g'$ are conjugate if and only if $ {h'}^{-1}C_{\Tilde{h}} \cap k'C_{\Tilde{k}} \neq \emptyset$, and that in this case, a conjugator can be found by solving \eqref{central-eqn}.
\end{proof}

The following lemma is a well-known number theoretic fact which lies at the root of several results of this paper.

\begin{lemma}
     The modular linear equation $ax \equiv b \mod n$ in $x$ has a solution if and only if $\gcd(a,n) \mid b$. When a solution exists, it may be found in time $\mathcal{O}(\log n)$.
\end{lemma}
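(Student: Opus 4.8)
The plan is to convert the congruence into an ordinary linear Diophantine equation and then invoke B\'ezout's identity, made effective by the extended Euclidean algorithm. By definition, $ax \equiv b \pmod{n}$ holds precisely when $n \mid (ax - b)$, i.e. when there exists an integer $y$ with $ax - ny = b$. Thus solvability of the congruence in $x$ is equivalent to solvability of the linear Diophantine equation $ax - ny = b$ in the pair $(x,y)$, and I would establish the stated criterion for the latter.

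Writing $d = \gcd(a,n)$, the forward direction is immediate: if $(x,y)$ satisfies $ax - ny = b$, then since $d \mid a$ and $d \mid n$ we obtain $d \mid (ax - ny) = b$. For the converse, suppose $d \mid b$ and write $b = dc$. The extended Euclidean algorithm produces integers $u,v$ with $au + nv = d$; multiplying through by $c$ gives $a(uc) + n(vc) = dc = b$, so $x_0 \equiv uc \pmod{n}$ solves the congruence. This settles the ``if and only if'' part.

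For the complexity claim, I would observe that the extended Euclidean algorithm run on $(a,n)$ simultaneously returns $d = \gcd(a,n)$ and the B\'ezout coefficients $u,v$. Given these, testing whether $d \mid b$ and forming $x_0 \equiv (b/d)\,u \pmod{n}$ require only a constant number of arithmetic operations, so the whole procedure is dominated by the Euclidean algorithm itself. The main point requiring care is therefore the running time rather than the correctness: the number of division steps performed on inputs bounded by $n$ is $\mathcal{O}(\log n)$, the extremal case being consecutive Fibonacci numbers (Lam\'e's theorem). Hence, when a solution exists, it is found in time $\mathcal{O}(\log n)$, as claimed. The only genuine subtlety is this iteration-count bound; the algebraic content is entirely standard.
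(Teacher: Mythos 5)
Your proof is correct, and it is the standard argument; the paper itself offers no proof of this lemma, stating it only as a well-known number-theoretic fact, so your B\'ezout-plus-extended-Euclid derivation supplies exactly what the paper takes for granted. One small remark: the $\mathcal{O}(\log n)$ bound you justify via Lam\'e's theorem counts division steps (arithmetic operations on integers of size up to $n$), not bit operations; this matches the paper's own convention, which measures complexity in $\mathcal{O}(\log\card{G})$ operations, but it is worth being explicit that the bound is not a bit-complexity claim.
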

\begin{example}
The dihedral groups $D_n$ are all efficiently $C$-decomposable. Consider $D_n$, which has the presentation $\langle x,y \mid x^n=1, y^2=1, yx = x^{-1}y\rangle$. Then we have $y^j x^i = x^{i(-1)^j}y^j$ for all $i, j$,  
\[C_{x^Ay^B} = \{x^{i+A(-1)^j-i(-1)^B}y^B \mid 0\leq i \leq n-1, \ 0\leq j\leq 1\}\] 

\[(x^ky^l)C_{x^Ay^B} = \{x^{k
+(-1)^l[i+A(-1)^j-i(-1)^B]}y^{l+B} \mid 0\leq i \leq n-1, \ 0\leq j\leq 1\}\]

Write $h= x^{k_1}y^{l_1}, u = x^{A_1}y^{B_1}, k=x^{k_2}y^{l_2}, v=x^{A_2}y^{B_2}$. It is then easily verified that finding a common element of $hC_u$ and $kC_v$ amounts to finding $(i_1, i_2, j_1, j_2)$ from an equation of the form
\[k_1
+(-1)^{l_1}[i_1+A_1(-1)^{j_1}-i_1(-1)^{B_1}] = k_2
+(-1)^{l_2}[i_2+A_2(-1)^{j_2}-i_2(-1)^{B_2}] \mod n\]

  Clearly, one may choose values of $j_1$ and $j_2$ and then find a linear relation between $i_1$ and $i_2$ by solving a linear modular equation. Thus, a solution can be found in time $\mathcal{O}(\log n)$.
  \end{example}

\begin{example}
The generalized quaternion groups $Q_{2^n}$ are all efficiently $C$-decomposable. A generalized quaternion group is given by the presentation
\begin{equation}\label{quatpres}
    Q_{2^n} = \langle x, y \mid x^N = 1, y^2 = x^{N/2}, yx = x^{-1}y, N= 2^{n-1}\rangle. 
\end{equation}
We derive the relations $x^i y = yx^{N-i}$ and $yx^i= x^{N-i} y$. Using these one easily derives the relation ${y^j}{x^i} =  x^{i(-1)^j}y^j$ for all $i,j \in \Z$.
\[(x^ky^l) C_{x^Ay^B} = \{x^{k+(-1)^{j+l}[A-i+i(-1)^B]}y^{B+l} \mid 0\leq i\leq N, \ 0\leq j \leq 1\}\]

Write $h= x^{k_1}y^{l_1}, u = x^{A_1}y^{B_1}, k=x^{k_2}y^{l_2}, v=x^{A_2}y^{B_2}$. Then finding a common element of $hC_u$ and $kC_v$ amounts to solving an equation of one of the following forms for $(i_1, i_2, j_1, j_2)$:
\[k_1
+(-1)^{j_1+l_1}[A_1-i_1+i_1(-1)^{B_1}] = k_2
+(-1)^{j_2+l_2}[A_2-i_2+i_2(-1)^{B_2}] \mod N\]

  Again, one may choose values of $j_1$ and $j_2$ and then find a linear relation between $i_1$ and $i_2$ by solving a linear modular equation. Thus, a solution can be found in time $\mathcal{O}(\log N)$.

\end{example}

\begin{remark}
Consider the group $S_n$ of permutations on $n$ elements. If $x$ and $y \in S_n$ are cycles, then finding an element of $hC_x\cap kC_y$ for any $h,k \in S_n$ can be done in polynomial time. This follows from theorem 5 in \cite{HERZOG2004323}, by taking $\sigma = h^{-1}k$ and noting that the result used at each step is in fact constructive and achievable in polynomial time. However, for $x$ and $y$ general permutations, a procedure to find an element in $hC_x\cap kC_y$ is not, in the author's knowledge, so far clear. 
\end{remark}

\section{Extraspecial $p$-groups}\label{extraspl}

\begin{definition}[Extraspecial $p$-group]
A $p$-group $G$ is called extraspecial if its center $Z(G)$ is cyclic of order $p$, and the quotient $G/Z(G)$ is a non-trivial elementary (i.e. every element has order $p$) abelian $p$-group.
\end{definition}

 The following results are standard, and can be found, for instance, in \cite{gorenstein2007finite}.

\begin{theorem}
There are precisely two isomorphism classes for the extra special group of order $p^3$: $M(p)= C_{p^2} \rtimes C_p$ and $N(p)=(C_p \times C_p ) \rtimes C_p$, where the latter may be represented as triangular matrices over the finite field of order $p$, with 1's on the diagonal. 
\end{theorem}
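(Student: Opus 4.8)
The plan is to classify such a $G$ from its internal structure and then split into two cases according to its exponent. First I would record the constraints forced by the definition: since $|Z(G)| = p$ and $G/Z(G) \cong C_p \times C_p$ is abelian, the commutator subgroup $[G,G]$ is contained in $Z(G)$, and it is nontrivial because $G$ is nonabelian; as $Z(G)$ has prime order $p$ this forces $[G,G] = Z(G) = \langle z \rangle$ with $\ord(z) = p$, and in particular $G$ has nilpotency class $2$. Choosing $a,b \in G$ whose images generate $G/Z(G)$, the commutator $[a,b] = a^{-1}b^{-1}ab$ is a nontrivial central element, so after replacing $b$ by a suitable power I may normalize to $[a,b] = z$. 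Then $G = \langle a, b \rangle$, and the only remaining freedom is the values of the $p$-th powers $a^p, b^p \in Z(G)$, i.e. the orders of $a$ and $b$.

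The key step is to control the power map $g \mapsto g^p$. For odd $p$ I would invoke the class-$2$ identity $(xy)^p = x^p y^p [y,x]^{\binom{p}{2}}$; since $p \mid \binom{p}{2}$ and $[y,x]$ has order dividing $p$, the final factor is trivial, so $g \mapsto g^p$ is a homomorphism $G \to Z(G)$. It vanishes on $Z(G)$ (as $z^p = 1$) and on $[G,G]$, hence factors through a linear map $G/Z(G) \cong \F_p^2 \to Z(G) \cong \F_p$. This gives a clean dichotomy: either this functional is identically zero, so every element satisfies $g^p = 1$ and $G$ has exponent $p$, or it is nonzero, so some element has order $p^2$ and $G$ has exponent $p^2$. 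Because the exponent is an isomorphism invariant, the two cases yield non-isomorphic groups, which already gives the lower bound of two classes.

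It then remains to pin down the isomorphism type within each case. In the exponent-$p^2$ case I would pick $a$ with $a^p = z$ and, using that the kernel of the functional is one-dimensional, pick $b$ of order $p$ spanning the complementary direction; rewriting $[a,b] = z = a^p$ gives $b^{-1}ab = a^{1+p}$, exhibiting $G = \langle a \rangle \rtimes \langle b \rangle \cong C_{p^2} \rtimes C_p = M(p)$. In the exponent-$p$ case, $\langle a, z \rangle \cong C_p \times C_p$ is normal with complement $\langle b \rangle$ acting by $b^{-1}ab = az$, giving $G \cong (C_p \times C_p) \rtimes C_p = N(p)$, which is realized concretely by the unitriangular $3 \times 3$ matrices over $\F_p$. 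To upgrade "at most two" to "precisely two," I would also verify the easy existence direction, namely that $M(p)$ and $N(p)$ are genuinely extraspecial of order $p^3$, so that both classes are nonempty.

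The main obstacle is the prime $p = 2$, where the argument genuinely breaks down. The identity above leaves the term $[y,x]^{\binom{2}{2}} = [y,x]$, which need not be trivial, so $g \mapsto g^2$ is no longer a homomorphism and the exponent dichotomy collapses: both extraspecial groups of order $8$, the dihedral $D_4$ and the quaternion $Q_8$, have exponent $4$. For $p = 2$ I would instead separate the two classes by counting involutions, which gives $5$ for $D_4$ and only $1$ for $Q_8$; since $Q_8$ is not a nontrivial split extension, it is not captured by the semidirect-product representatives $M(2), N(2)$ (both of which degenerate to $D_4$) and must be treated directly. Apart from this exceptional prime, the remaining verifications—the commutator normalization and the two semidirect-product identifications—are routine.
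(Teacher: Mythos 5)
The paper does not prove this theorem at all: it is stated as a standard fact and delegated to the literature (Gorenstein), so there is no in-paper argument to compare against. Your self-contained proof is the standard classification argument and is correct for odd $p$: the class-$2$ identity $(xy)^p=x^py^p[y,x]^{\binom{p}{2}}$ makes $g\mapsto g^p$ a homomorphism into $Z(G)$ vanishing on $Z(G)$, the induced linear functional on $G/Z(G)\cong\F_p^2$ gives the exponent-$p$ versus exponent-$p^2$ dichotomy, and the normalization $[a,b]=z$ together with the choice of $a$, $b$ adapted to the kernel of that functional identifies the two cases with $C_{p^2}\rtimes C_p$ and $(C_p\times C_p)\rtimes C_p$ respectively. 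Checking that both model groups are genuinely extraspecial closes the count at exactly two. Your observation about $p=2$ is a genuine and worthwhile catch rather than a defect of your proof: the theorem as stated in the paper is false for $p=2$, since the two extraspecial groups of order $8$ are $D_4$ and $Q_8$, both of exponent $4$, and $Q_8$ is not a split extension of either displayed form; the paper silently assumes $p$ odd here and throughout (e.g.\ the presentation of $M(p)$ and the Heisenberg model of $N(p)$ both collapse to $D_4$ at $p=2$). Your proposed fix for $p=2$ (distinguishing $D_4$ from $Q_8$ by counting involutions and treating $Q_8$ separately) is sketched rather than carried out, but since the correct statement for $p=2$ requires replacing $N(2)$ by $Q_8$ anyway, the appropriate resolution is to restrict the theorem to odd $p$, which is what your argument actually establishes.
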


\begin{theorem} 
Every extraspecial $p$-group has order $p^{1+2n}$ for some positive integer $n$, and conversely for each such number there are exactly two extraspecial groups up to isomorphism. Every extraspecial group of order $p^{1+2n}$ can be written as a central product of either $n$ copies of $M(p)$ or $n-1$ copies of $M(p)$ and 1 copy of $N(p)$. 
\end{theorem}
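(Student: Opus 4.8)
The plan is to study the extraspecial group $G$ through the commutator form it induces on the $\F_p$-vector space $V := G/Z(G)$. Since $G/Z(G)$ is abelian, the commutator subgroup $[G,G]$ lies in $Z(G) = C_p$, and as $G$ is nonabelian (its center has order $p$ while $G/Z(G)$ is nontrivial) we get $[G,G] = Z(G)$. Identifying $Z(G)$ with $\F_p$, the map $\beta(\bar{x},\bar{y}) := [x,y]$ is well-defined on $V \times V$, because replacing $x,y$ by central multiples does not change the commutator, and it is $\F_p$-bilinear since commutators are central (so conjugation acts trivially on them). It is alternating as $[x,x]=1$, and nondegenerate because any $\bar{x}$ in its radical would commute with all of $G$ and hence lie in $Z(G)$, i.e. vanish in $V$. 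A nondegenerate alternating form forces even dimension, so $\dim_{\F_p} V = 2n$ and $\card{G} = p \cdot p^{2n} = p^{1+2n}$, settling the first assertion.

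Next I would build the central product decomposition from a symplectic basis. Choosing a basis $e_1,f_1,\dots,e_n,f_n$ of $V$ with $\beta(e_i,f_i)=1$ and all other pairings among basis vectors zero, I lift each $e_i,f_i$ to elements $a_i,b_i \in G$ and set $H_i := \langle a_i, b_i, Z(G)\rangle$. The relation $\beta(e_i,f_i)=1$ means $[a_i,b_i]$ generates $Z(G)$, so $H_i$ is nonabelian of order $p^3$, hence extraspecial and therefore isomorphic to $M(p)$ or $N(p)$ by the preceding theorem. The vanishing of $\beta$ across distinct blocks gives $[a_i,a_j]=[a_i,b_j]=[b_i,b_j]=1$ for $i\neq j$, so the $H_i$ commute elementwise with $H_i \cap H_j = Z(G)$; as the $a_i,b_i$ generate $G$ modulo the center, $G = H_1 H_2 \cdots H_n$ is a central product of $n$ extraspecial factors of order $p^3$.

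It remains to show there are exactly two isomorphism types and to reduce an arbitrary such decomposition to a standard representative, and I expect this \emph{collapsing} step to be the main obstacle, precisely because it is governed by different relations for odd $p$ and for $p=2$. I would isolate an invariant constant on each class. For $p$ odd the natural invariant is the exponent: a central product of extraspecial groups has exponent $p$ exactly when every factor does (commuting elements of order $p$ multiply to order dividing $p$), so the all-$N(p)$ product has exponent $p$ while any product containing an $M(p)$ has exponent $p^2$; here one verifies the governing isomorphism $M(p)*M(p) \cong M(p)*N(p)$, letting all but one $M(p)$ be traded for $N(p)$. For $p=2$ the relevant invariant is instead the type of the quadratic refinement of $\beta$, and the governing relation is $M(2)*M(2)\cong N(2)*N(2)$ (that is, $Q_8 * Q_8 \cong D_8 * D_8$), so only the parity of the number of $M(2)$ factors matters. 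In each case one checks the two candidate representatives are non-isomorphic via the invariant and that every central product of copies of $M(p)$ and $N(p)$ collapses onto one of them, and a cardinality comparison confirms that exactly these two classes occur; the delicate point is establishing the correct trading isomorphisms and ruling out any further coincidences among central products.
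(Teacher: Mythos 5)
The paper never proves this theorem: it is quoted as a standard fact with a citation to Gorenstein, so your attempt can only be measured against the standard argument --- which is exactly what your first two paragraphs reproduce, and reproduce correctly. The commutator form $\beta$ on $V=G/Z(G)$ is well defined, bilinear, alternating and nondegenerate for precisely the reasons you give, forcing $\dim V=2n$; lifting a symplectic basis yields pairwise commuting subgroups $H_i$ of order $p^3$ with $H_i\cap H_j=Z(G)$ and $G=H_1\cdots H_n$. That half is complete.

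The genuine gap is in your third paragraph, where all the content of ``exactly two isomorphism classes'' actually lives. The trading isomorphisms $M(p)*M(p)\cong M(p)*N(p)$ (odd $p$) and $D_8*D_8\cong Q_8*Q_8$ ($p=2$) are introduced with ``one verifies'' and never established; they are the hard step, requiring (for odd $p$) the explicit construction inside $M(p)*M(p)$, with centers identified, of a nonabelian exponent-$p$ subgroup of order $p^3$ (e.g.\ generated by $x_1x_2^{-1}$ and $y_1$) together with a commuting complement. A cleaner way to finish, which avoids case-by-case trading: for odd $p$ the $p$-th power map is a homomorphism on a class-$2$ group (since $p\mid p(p-1)/2$) and so induces a linear functional on $V$, and $Sp(2n,p)$ acts transitively on nonzero functionals, giving exactly two classes; for $p=2$ the squaring map is a quadratic refinement of $\beta$ and the Arf invariant classifies. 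More seriously, your (correct) trading isomorphism contradicts the statement you were asked to prove: for odd $p$ it yields that the central product of $n$ copies of $M(p)$ is isomorphic to that of $n-1$ copies of $M(p)$ with one $N(p)$ --- both of exponent $p^2$ --- so the two alternatives named in the theorem are isomorphic to each other, and the exponent-$p$ group (the product of $n$ copies of $N(p)$) is missing. What your argument delivers is the standard classification with representatives $N(p)*\cdots*N(p)$ and $M(p)*N(p)*\cdots*N(p)$; the printed theorem has the roles of $M(p)$ and $N(p)$ interchanged (and for $p=2$ its definitions of $M(2)$ and $N(2)$ both degenerate to $D_8$, so the dichotomy must be stated with $D_8$ and $Q_8$). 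You should have noticed and flagged that your proof establishes a corrected statement rather than the literal one.
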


A polynomial time algorithm for the computation of such a central product decomposition was given in \cite{Wilson+2009+813+830}. 

\begin{theorem}[\cite{Wilson+2009+813+830}]\label{central-p3}
Let $P$ be a finite $p$-group of class 2 (i.e. $P'\leq Z(P)$) for a known prime $p$. Then, there is a Las-Vegas polynomial-time (in the order of $G$) algorithm which returns a set $\mathcal{H}$ of subgroups of $P$ (called a central decomposition of $P$) where distinct members pairwise commute, with $\mathcal{H}$ of maximum size such that $\mathcal{H}$ generates $P$ and no proper subset does.
\end{theorem}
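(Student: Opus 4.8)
The plan is to convert this purely group-theoretic statement into a question in linear algebra over $\F_p$ and then to solve it with known randomized algorithms for associative algebras over finite fields. First I would exploit the class-$2$ hypothesis $P' \le Z(P)$. Because all commutators are central, the commutator map $(a,b) \mapsto [a,b]$ is biadditive and depends only on the cosets of $a$ and $b$ modulo $Z(P)$; it therefore descends to an alternating bimap $b : V \times V \to P'$, where $V := P/Z(P)$ is a finite abelian $p$-group (and an $\F_p$-space once one passes to the Frattini quotient, as in the extraspecial case). The key translation is that a set $\mathcal{H} = \{H_1,\dots,H_r\}$ of pairwise commuting subgroups generating $P$ corresponds to a decomposition $V = \bigoplus_i \overline{H_i}$ whose summands are pairwise orthogonal under $b$ (pairwise commuting means $b(\overline{H_i},\overline{H_j}) = 0$ for $i \ne j$), and irredundancy of $\mathcal{H}$ together with maximality of $r$ corresponds to this orthogonal decomposition being \emph{fully refined}, i.e.\ each $\overline{H_i}$ being orthogonally indecomposable.

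Second, I would attach to $b$ its adjoint algebra: the associative $\F_p$-algebra $A$ of linear operators $f$ on $V$ that admit an adjoint $f^{*}$ with respect to $b$, equipped with the involution $f \mapsto f^{*}$. The crucial structural claim is that fully refined orthogonal decompositions of $V$ are governed by the idempotents of $A$ fixed by this involution, with orthogonally indecomposable summands corresponding to primitive such idempotents. This requires a Krull--Schmidt-type theorem for bimaps guaranteeing that fully refined central decompositions exist, that any two of them have members that are pairwise isometric after reordering, and hence that they share the same maximum size $r$. Making this correspondence precise --- handling the involution, the radical $V^{\perp}$ on which $b$ degenerates, and the lift of idempotents of $A$ back to genuine subgroups of $P$ --- is where I expect the main difficulty to lie; this is the conceptual core of Wilson's method in \cite{Wilson+2009+813+830}.

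Finally, given this dictionary, the algorithm assembles from standard subroutines, each polynomial in $\log \card{P}$. From a polycyclic presentation of $P$ one computes bases for $V$ and $P'$ and the structure constants of $b$; the adjoint algebra $A$ is then the solution space of a linear system over $\F_p$ and is found by Gaussian elimination. One next computes the Wedderburn decomposition of $A$ and a complete set of orthogonal primitive involution-fixed idempotents using idempotent lifting and the MeatAxe. This last step is exactly where the \emph{Las Vegas} randomness enters, since extracting idempotents reduces to factoring polynomials over $\F_p$ and finding splitting elements, for which only randomized polynomial-time algorithms are known. Pulling each primitive idempotent back to the preimage in $P$ of its image subspace yields the subgroups $H_i$; verifying that they pairwise commute, generate $P$, and form an irredundant family of maximum size then certifies that $\mathcal{H} = \{H_1,\dots,H_r\}$ is the desired central decomposition.
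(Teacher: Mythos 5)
The paper does not prove this statement at all: it is imported as a citation to \cite{Wilson+2009+813+830} and used as a black box (the only surrounding text describes the input conventions of the algorithm), so there is no internal proof to compare your attempt against. Judged on its own, your sketch is a broadly faithful outline of Wilson's actual strategy: pass to the alternating bimap $b : P/Z(P) \times P/Z(P) \to P'$ induced by commutation, translate central decompositions of $P$ into orthogonal decompositions of $b$, control these via idempotents of the adjoint $*$-algebra, and extract primitive idempotents with Las Vegas algorithms for associative algebras over $\F_p$, the randomness coming from polynomial factorization. One substantive correction, though: your appeal to a ``Krull--Schmidt-type theorem'' asserting that any two fully refined central decompositions have pairwise isometric members after reordering, and hence automatically the same size, is false in general for $p$-groups of class $2$. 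Wilson's work shows precisely that fully refined central decompositions can have \emph{different} sizes; this is why the theorem statement must explicitly ask for a decomposition of maximum size, and why the algorithm needs the finer structure of the symmetric (Jordan) part of the adjoint $*$-algebra rather than a bare exchange argument. Your sketch also flags, but does not resolve, the genuinely hard steps --- handling the radical $V^{\perp}$, lifting idempotents back to honest subgroups of $P$, and certifying maximality --- so as written it is a plausible roadmap to Wilson's proof rather than a self-contained argument; for the purposes of this paper, which only needs the result as an external subroutine, that is an acceptable level of detail, but it should be presented as a summary of \cite{Wilson+2009+813+830} and not as an independent proof.
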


The algorithm referred in the above theorem assumes that the group $P$ is given by its generators, $P=\langle S \rangle$. A particular representation is not selected, but it is assumed that $P$ can be input with $\mathcal{O}(\card{S}n)$ bits of data and that there are polynomial-time algorithms for multiplication, inversion, equality testing, and subgroup membership testing in $P$. 

\begin{remark}
We observe that given the generators of an extraspecial group of order $p^3$, one can find, in polynomial time, which of the two isomorphism classes this group belongs to, simply by raising each generator to the power $p$.
\end{remark}

\subsection{CSP in Extraspecial Groups of order $p^3$}\label{pgps}

In the group $M(p)=C_{p^2} \rtimes C_p$, elements are clearly represented as words $x^iy^j$ with $x \in C_{p^2}$ and $y \in C_p$, whereas in $N(p)=(C_p \times C_p ) \rtimes C_p$, they may be represented as matrices, but also equally well in the semidirect notation, as it is easy to switch between both the presentations. We deal with these two groups separately.

\subsubsection{$M(p)= C_{p^2} \rtimes C_p$}

It is well-known (refer, for example, to \cite{conrad2014groups}) that $M(p)$ has the presentation \begin{equation}\label{extra1}
    M(p) =\langle x, y \mid x^{p^2}=1, y^p=1, yxy^{-1} = x^{1+p} \rangle
\end{equation} 

The following lemma can be easily verified by induction.

\begin{lemma}\label{rel-extra1}
     The following relation holds in $M(p)$. 
\begin{align*}
    y^j x^i  & = x^{{i(1+p)^j}}y^j =  x^{{i(1+jp)}}y^j\ \forall \ i,j \in \Z
\end{align*}
Here, all powers of $x$ are taken $\mod p^2$ and all powers of $y$ are taken $\mod p$. 
\end{lemma}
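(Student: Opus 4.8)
The plan is to prove the two displayed equalities in turn. First I would establish the conjugation formula $y^j x^i = x^{i(1+p)^j} y^j$ directly from the defining relation, written as $yx = x^{1+p}y$, by induction. Then I would simplify the exponent by reducing $(1+p)^j$ modulo $p^2$ via the binomial theorem, which collapses it to $1+jp$.

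For the first equality, I would begin with the single-$y$ case $yx^i = x^{i(1+p)}y$, proved by induction on $i \geq 0$: the base case $i=0$ is immediate, and the inductive step reads $yx^i = (yx)x^{i-1} = x^{1+p}(yx^{i-1}) = x^{1+p}x^{(i-1)(1+p)}y = x^{i(1+p)}y$. With this in hand I would induct on $j \geq 0$, assuming $y^j x^i = x^{i(1+p)^j}y^j$ and multiplying on the left by $y$, applying the $j=1$ case to the inner $x$-power to get $y^{j+1}x^i = y\bigl(x^{i(1+p)^j}\bigr)y^j = x^{i(1+p)^{j+1}}y^{j+1}$. To cover all $i,j \in \Z$ rather than only nonnegative integers, I would use that $x$ has order $p^2$ and $y$ has order $p$, so negative exponents reduce to nonnegative ones modulo $p^2$ and $p$; equivalently, one checks $y^{-1}x = x^{(1+p)^{-1}}y^{-1}$ directly, with $(1+p)^{-1}$ taken $\bmod\ p^2$.

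For the second equality, I would invoke the binomial theorem: $(1+p)^j = \sum_{k=0}^{j}\binom{j}{k}p^k = 1 + jp + \binom{j}{2}p^2 + \cdots$, and observe that every term with $k \geq 2$ carries a factor $p^2$. Since the exponent of $x$ is read modulo $p^2$, this gives $(1+p)^j \equiv 1 + jp \pmod{p^2}$, hence $x^{i(1+p)^j} = x^{i(1+jp)}$, which is the claimed simplification.

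I do not anticipate any genuine obstacle, as the argument is a routine double induction combined with an elementary binomial reduction. The only point needing mild care is the treatment of negative exponents, handled by the order considerations above; alternatively, reading the whole statement with exponents interpreted in $\Z/p^2\Z$ and $\Z/p\Z$ makes the nonnegative-exponent inductions sufficient on their own.
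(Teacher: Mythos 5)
Your proof is correct and takes the same route the paper indicates: the paper simply remarks that the lemma "can be easily verified by induction," and your double induction on $i$ and $j$ from the relation $yx = x^{1+p}y$, followed by the binomial reduction $(1+p)^j \equiv 1+jp \pmod{p^2}$, is exactly that verification, with the extension to negative exponents handled properly via the orders of $x$ and $y$.
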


\begin{theorem}\label{csp-extra1}
Consider two elements $g = x^a y^b $ and $g' = x^A y^B$ of $M(p)=C_{p^2} \rtimes C_p$. Then, $g$ and $g'$ are conjugates if and only if $a=A \mod p$, $B = b \mod p$. 
In this case, a conjugator $h=x^i y^j$ such that $h^{-1} g h$ can be found by solving $(A - a)/p = (aj-ib) \mod p$. Consequently, the CSP has a polynomial time solution in $M(p)$.
\end{theorem}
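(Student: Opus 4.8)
The plan is to compute the general conjugate $h^{-1}gh$ of $g = x^a y^b$ by an arbitrary $h = x^i y^j$, put it in the normal form $x^\ast y^\ast$ using Lemma \ref{rel-extra1}, and read off exactly how conjugation acts on the two exponents. Everything then collapses to a single modular linear congruence, to which the number-theoretic lemma above applies directly.

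First I would write $h^{-1} = y^{-j}x^{-i}$ and expand
\[
h^{-1}gh = y^{-j}x^{-i}\,x^a y^b\, x^i y^j = y^{-j} x^{a-i} y^b x^i y^j .
\]
Applying $y^{-j}x^{a-i} = x^{(a-i)(1-jp)}y^{-j}$ and then $y^{b-j}x^i = x^{i(1+(b-j)p)}y^{b-j}$ from Lemma \ref{rel-extra1} collects all powers of $y$ on the right, yielding $h^{-1}gh = x^{a + p(ib-aj)}\,y^{b}$, with the $x$-exponent read mod $p^2$ and the $y$-exponent mod $p$ (this matches the displayed congruence up to the immaterial overall sign). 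Two consequences are immediate: the $y$-exponent is unchanged, so conjugacy forces $B \equiv b \pmod p$; and the $x$-exponent changes only by the multiple $p(ib-aj)$ of $p$, so conjugacy forces $A \equiv a \pmod p$. This establishes necessity of both stated conditions.

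For sufficiency, assume $B \equiv b \pmod p$ and $A \equiv a \pmod p$. Then $p \mid A-a$, and matching the $x$-exponents amounts to solving the linear congruence
\[
\frac{A-a}{p} \equiv ib - aj \pmod p
\]
in the unknowns $i,j$. I would fix one unknown (set $j=0$ when $b \not\equiv 0$, or $i=0$ when $a\not\equiv 0$) and solve the resulting one-variable congruence by the modular linear equation lemma; since the modulus is $p$ and $\card{M(p)} = p^3$, this costs $\mathcal{O}(\log p) = \mathcal{O}(\log \card{M(p)})$, giving the claimed polynomial-time CSP solution.

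The one point needing care, and which I expect to be the main obstacle to a fully faithful statement, is the degenerate case $a \equiv b \equiv 0 \pmod p$, in which $g = x^a$ is central. Here the coefficient vector $(b,-a)$ vanishes mod $p$, so the congruence degenerates to $(A-a)/p \equiv 0 \pmod p$; it is solvable precisely when $A \equiv a \pmod{p^2}$, i.e.\ when $g = g'$, correctly reflecting that a central element is conjugate only to itself. Thus the displayed congruence is the faithful criterion in every case, while the purely verbal condition ``$a \equiv A$, $B \equiv b \pmod p$'' is accurate only for non-central $g$ and should be read together with the solvability of this congruence. Verifying that this subcase is handled by the lemma (rather than naively inverting a non-invertible coefficient) is the step requiring genuine attention.
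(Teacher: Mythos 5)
Your proof follows essentially the same route as the paper: compute $h^{-1}gh = x^{a+p(ib-aj)}y^{b}$ via Lemma \ref{rel-extra1}, read off the necessary congruences $B\equiv b$, $A\equiv a \pmod p$, and reduce to the two-variable linear congruence $(A-a)/p \equiv ib-aj \pmod p$, solved by fixing one unknown. Your additional remark about the degenerate central case $a\equiv b\equiv 0\pmod p$ is a genuine refinement the paper omits: there the stated ``if and only if'' condition is not sufficient (e.g.\ $x^{p}$ and $x^{2p}$ satisfy it yet are distinct central elements, hence not conjugate), and the solvability of the displayed congruence is the correct criterion in all cases.
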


\begin{proof}
 We first try to find a solution for $j\geq 0$. We have, by recursively using the relations in Lemma \ref{rel-extra1},
 \begin{equation}\label{extra1-conj}
     (x^iy^j)^{-1}(x^a y^b)(x^iy^j) =x^{(a-i)(1-jp)} y^{b-j} x^i y^j =x^{a-p(aj-bi)}y^b  \ 
 \end{equation}
 
Thus, by the uniqueness of a semidirect product representation,
\begin{align*}
    g^\prime = g^h &\iff b=B \mod p, \ \text{and} \ A= a{(1-jp)}+ibp \mod p^2 
\end{align*}
In particular, $A = a \mod p$. However, if we do have this condition, then the equation reduces to $(A - a)/p = (ib-aj) \mod p$, which is a linear modular equation in two variables. Clearly, one may fix a value of one of the variables and then solve the other, so that a solution $(i,j)$ satisfying the last condition can be found in polynomial time.
\end{proof}

\subsubsection{$N(p)= (C_{p}\times C_p) \rtimes C_p$}

$N(p)$ has the presentation \begin{equation}\label{extra2-rel}
    N(p) =\langle x, y, z \mid x^p= y^p=z^p=1, xy=yx, yz=zy, zxz^{-1} = xy^{-1} \rangle
\end{equation}
Note that $N(p)$ is also known as the Heisenberg group over $\Z_p$, and can also be seen as the subgroup of $Mat_3(\mathbb{F}_p)$ with 1's along the diagonal, where $x$ corresponds to $X=\left(\begin{smallmatrix}
1 & 1 & 0\\
0 & 1 & 0 \\
0 & 0 & 1
\end{smallmatrix}\right)$, $y$ corresponds to $Y = \left(\begin{smallmatrix}
1 & 0 & 1\\
0 & 1 & 0 \\
0 & 0 & 1
\end{smallmatrix}\right)$, and $z$ corresponds to $Z = \left(\begin{smallmatrix}
1 & 0 & 0\\
0 & 1 & 1 \\
0 & 0 & 1
\end{smallmatrix}\right)$. Further, representations of $G$ in terms of generators and relators, and in terms of matrices, are interchangeable. Given a matrix $G=\left(\begin{smallmatrix}
1 & a & b\\
0 & 1 & c\\
0 & 0 & 1
\end{smallmatrix}\right)$, one can easily verify that $G= X^a Y^b Z^c$. Thus, it is enough to solve the CSP in the generator and relation setting. The following lemma is easily verified by induction.

\begin{lemma}
     The relation $z^ k x^a z^{-k}= x^a y^{-ka}$ holds in $N(p)$ for all $a,\ k \in \Z$. Thus, $z^ k x^a y^b = x^a y^{-kab}z^k$ for all $a,\ k \in \Z$.
\end{lemma}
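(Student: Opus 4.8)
The plan is to prove the two displayed relations in turn: the first, $z^k x^a z^{-k} = x^a y^{-ka}$, is the substantive one and I would establish it by induction, while the second then drops out by a short rearrangement using the commutativity of $y$ and $z$.

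First I would reduce the conjugation of a power to a power of a conjugation. Since conjugation by $z^k$ is an automorphism of $N(p)$, we have $z^k x^a z^{-k} = (z^k x z^{-k})^a$, so it suffices to compute $z^k x z^{-k}$ and then raise to the $a$-th power, using that $x$ and $y$ commute so that $(xy^{-k})^a = x^a y^{-ka}$. To compute $z^k x z^{-k}$ I would induct on $k \geq 0$. The base case $k=1$ is exactly the defining relation $zxz^{-1} = xy^{-1}$ from \eqref{extra2-rel}. For the inductive step, assuming $z^k x z^{-k} = xy^{-k}$, I conjugate once more: $z^{k+1} x z^{-(k+1)} = z(xy^{-k})z^{-1} = (zxz^{-1})(z y^{-k} z^{-1})$. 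The key input here is the relation $yz = zy$, which gives $z y^{-k} z^{-1} = y^{-k}$, so the right-hand side collapses to $(xy^{-1})y^{-k} = xy^{-(k+1)}$, completing the induction. Negative $k$ follows either by running the same argument with $z^{-1}$ in place of $z$, or simply by inverting the established identity.

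For the second relation I would start from $z^k x^a z^{-k} = x^a y^{-ka}$, rewrite it as $z^k x^a = x^a y^{-ka} z^k$, and right-multiply by $y^b$. Since $y$ and $z$ commute, $z^k$ slides freely past $y^b$, and the two powers of $y$ combine, yielding the normal form $x^a y^{\,b-ka} z^k$ with $z^k$ fully to the right.

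There is no deep obstacle here; the only points requiring care are (i) justifying that the $a$-th power can be pulled through the automorphism, which rests on $[x,y]=1$, and (ii) keeping the two distinct commutation behaviours straight, namely that $z$ does \emph{not} commute with $x$ (this is the source of the $y^{-ka}$ twist) but \emph{does} commute with $y$ and all its powers (which is precisely what lets $z^k$ migrate cleanly to the right in the second relation). Both facts are immediate from the presentation \eqref{extra2-rel}.
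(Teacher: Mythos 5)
Your proof is correct and matches the paper's (unstated) approach: the paper simply asserts the lemma is ``easily verified by induction,'' and your induction on $k$ using $zxz^{-1}=xy^{-1}$ together with $[x,y]=[y,z]=1$ is exactly the intended argument. Note that your derived normal form $x^a y^{\,b-ka}z^k$ also makes clear that the exponent $-kab$ in the statement is a typo for $-ka+b$, as confirmed by how the relation is used later in the proof of Theorem \ref{csp-extra2}.
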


\begin{theorem}\label{csp-extra2}
The elements $g = x^a y^b z^c $ and $g' = x^A y^B z^C $ in $N(p)$ are conjugate if and only if $a = A \mod p$ and $C=c \mod p$. In this case, $h = x^i y^j z^k$ is a conjugator such that $g' = h^{-1} (x^a y^b z^c)h$ if and only if $(i,k)$ satisfies $B-b = ka -ic$. Consequently, the CSP has  a polynomial time solution in $N(p)$.
\end{theorem}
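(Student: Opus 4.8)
The plan is to mirror the argument used for $M(p)$ in Theorem \ref{csp-extra1}: compute an explicit normal form for an arbitrary conjugate $h^{-1}gh$, read off the conjugacy conditions from the uniqueness of the semidirect-product representation, and reduce the search for a conjugator to a single linear modular equation.

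First I would fix the normal form: by the presentation \eqref{extra2-rel} every element of $N(p)$ is uniquely of the shape $x^a y^b z^c$ with exponents taken mod $p$, and $y$ generates the centre $Z(N(p))$. Writing $h = x^i y^j z^k$, so that $h^{-1} = z^{-k}y^{-j}x^{-i}$, I would conjugate $g = x^a y^b z^c$ first by $x^i$ and then by $z^k$; since $y$ is central the factor $y^j$ cancels and plays no role, which is precisely why the final condition involves only $(i,k)$. Using the relation $z^k x^a z^{-k} = x^a y^{-ka}$ from the preceding lemma (together with its inverse $z^{-k}x^a z^{k} = x^a y^{ka}$ and the commuting relations $xy=yx$, $yz=zy$), a short substitution yields
\[
h^{-1}(x^a y^b z^c)h = x^{a}\, y^{\,b + ka - ci}\, z^{c},
\]
so that the $x$- and $z$-exponents are untouched while the $y$-exponent is shifted by $ka - ci$.

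By uniqueness of the normal form, $g' = x^A y^B z^C$ is conjugate to $g$ if and only if $a \equiv A$, $c \equiv C \pmod p$ and the equation $ka - ci \equiv B - b \pmod p$ admits a solution $(i,k)$; the latter is exactly the stated conjugator condition $B - b = ka - ic$. To finish I would decide solvability and exhibit a solution efficiently: when $(a,c) \not\equiv (0,0)\pmod p$ the map $(i,k)\mapsto ka - ci$ is a nonzero linear form over the field $\Z_p$, hence surjective, so a solution always exists, and fixing one variable and invoking the number-theoretic lemma on modular linear equations produces an explicit $(i,k)$ in time $\mathcal{O}(\log p)$. This delivers the claimed polynomial-time solution.

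The one point I would treat with care — and the only genuine subtlety — is the degenerate case $a \equiv c \equiv 0 \pmod p$, in which $g = y^b$ lies in the centre and the linear form $ka - ci$ vanishes identically. Here the criterion ``$a \equiv A$, $c \equiv C$'' is not by itself sufficient, since a central element is conjugate only to itself, forcing the extra requirement $B \equiv b \pmod p$. I would therefore state the equivalence with this central case separated out (the analogous edge case also occurs for $M(p)$), after which the computation above shows that outside the centre the relevant linear equation is always solvable and a conjugator is recovered in $\mathcal{O}(\log p)$ time.
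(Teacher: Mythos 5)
Your proposal is correct and follows essentially the same route as the paper's proof: the same explicit computation $h^{-1}(x^a y^b z^c)h = x^{a}y^{\,b+ka-ic}z^{c}$ via the relation $z^k x^a z^{-k} = x^a y^{-ka}$, the same appeal to uniqueness of the normal form to read off the conditions, and the same fix-one-variable solution of the resulting linear modular equation. Your handling of the degenerate case is in fact slightly sharper than the paper's: the paper's proof itself concedes that when $a \equiv c \equiv 0 \pmod p$ conjugacy forces $B \equiv b \pmod p$, yet the theorem's stated ``if and only if $a = A \bmod p$ and $C = c \bmod p$'' omits this condition, so your suggestion to state the central case separately repairs a genuine imprecision in the statement rather than introducing any gap.
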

\begin{proof} Write $g = x^a y^b z^c $, $g' = x^A y^B z^C$, and $ h = x^i y^j z^k$ 
 We have, \begin{align*}
   (x^iy^jz^k)^{-1}(x^a y^b z^c) x^{i}y^{j} z^{k} 
    = & z^{-k} x^{a-i}z^c x^iz^ky^b \\
    = & x^{a-i}y^{k(a-i)}z^{-k}x^i y^{-ci}z^cz^k y^b \\
    =& x^{a-i}y^{k(a-i)}x^i y^{ki}z^{-k}y^{b-ci}z^{k+c} \\
    = & x^{a}y^{ka-ic+b}z^{c}
\end{align*} 
Thus for $g$ and $g'$ to be conjugate, we need $A= a \mod p$, $C = c \mod p$, and that there exist a solution $(i,k)$ to $B-b = ka -ic \mod p$ (note that $j$ can take any value). Given that $g$ and $g'$ are conjugate, if $a = c= 0 \mod p$, then $B=b \mod p$ and any tuple $(i,k)$ works. If one of $a$ and $c$, say $a \neq 0\mod p$ (the case for $c \neq 0 \mod p$ is analogous), then one can choose a value of $i$ and solve $k = (B-b+ic)/a \mod p$ for $k$. Clearly, this is a polynomial time solution.
\end{proof}

\subsection{Lifting the solution to central products }
We now prove efficient $C$-decomposability for extraspecial $p$-groups. This will then allow us to lift the individual solutions of the CSP in $M(p)$ and $N(p)$ to all extraspecial $p$-groups.

\begin{theorem}\label{desc-class}
 Any central product $G$ of finitely many copies of $N(p)$ and $M(p)$ is efficiently $C$-decomposable.
\end{theorem}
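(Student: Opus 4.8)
The plan is to reduce the whole problem to the elementary observation that, in any group of the kind considered, every conjugacy class is either a single point or an entire coset of the centre. First I would record that a central product of finitely many copies of $M(p)$ and $N(p)$ (with their centres amalgamated) is again an extraspecial $p$-group, by the structure theory recalled above; in particular $G$ has nilpotency class $2$ and $G' = Z(G) \cong C_p$. The decisive point is then the following. Fix $x \in G$ and consider the map $\phi_x \colon G \to Z(G)$ given by $\phi_x(g) = x^{-1}g^{-1}xg$. Because $G$ has class $2$ all commutators are central, whence $[x,gh] = [x,g][x,h]$ and $\phi_x$ is a homomorphism. Its image is a subgroup of $G' = Z(G) \cong C_p$, and since $C_p$ has no proper nontrivial subgroup this image is either trivial (exactly when $x \in Z(G)$) or the whole of $Z(G)$. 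As $g^{-1}xg = x\,\phi_x(g)$, I obtain $C_x = \{x\}$ when $x$ is central and $C_x = xZ(G)$ when $x$ is non-central.

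With this in hand, for arbitrary $h,k,x,y \in G$ both $hC_x$ and $kC_y$ are cosets of subgroups of the prime-order group $Z(G)$: each is a singleton ($\{hx\}$, resp.\ $\{ky\}$) when the relevant element is central, and a full coset ($hxZ(G)$, resp.\ $kyZ(G)$) otherwise. Finding a point of $hC_x \cap kC_y$ thus becomes the trivial task of intersecting two cosets inside $Z(G) \cong C_p$. Algorithmically I would proceed as follows: test whether $x$ and $y$ lie in $Z(G)$ --- equivalently, whether they commute with the (polynomially many) generators of $G$, which is polynomial time --- and then output $ky$ if $y$ is central and $hx$ otherwise. Using that the intersection is assumed nonempty, a four-way case check (both central, exactly one central, neither central) confirms that the returned element indeed lies in both $hC_x$ and $kC_y$. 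Since it is produced by a single group multiplication and the groups in question admit polynomial-time arithmetic, the procedure runs in time $\mathcal{O}(\log \card{G})$.

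I do not expect a genuine obstacle here: once the conjugacy-class description is established, the rest is bookkeeping. The two points deserving care are, first, the justification that the hypothesised central product really is extraspecial so that $G' = Z(G) \cong C_p$ (this is precisely what forces the image of $\phi_x$ to collapse to $\{1\}$ or to all of $Z(G)$), and second, the verification that centrality testing and the final multiplication are polynomial time in whatever representation $G$ is presented in. Both reduce to the standard primitives --- multiplication, inversion, equality and membership testing --- that are available by hypothesis for these groups, so no computation beyond them is required.
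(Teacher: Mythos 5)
Your proof is correct, and it takes a genuinely different route from the paper's. The paper never exploits the fact that the amalgamated central product is extraspecial; instead it computes directly with the presentations of $M(p)$ and $N(p)$, writes each conjugacy class as an explicitly parameterized set of words (reusing the calculations behind Theorems \ref{csp-extra1} and \ref{csp-extra2}), puts a general element of $g'C_g$ into the normal form \eqref{cp-expr} with linear polynomials in the exponents, and reduces the intersection problem to solving at most $r+s$ linear modular equations. Your argument replaces all of that bookkeeping with one structural observation: since the amalgamated product is extraspecial, it has class $2$ with $G'=Z(G)\cong C_p$, so for each $x$ the map $g\mapsto [x,g]$ is a homomorphism whose image is a subgroup of $C_p$; hence $C_x=\{x\}$ for central $x$ and $C_x=xZ(G)$ otherwise, and intersecting $hC_x$ with $kC_y$ collapses to a centrality test plus one multiplication (your four-way case check is the right verification, and the case split on whether $y$ is central is genuinely needed, since outputting $hx$ unconditionally fails when $x$ is non-central but $y$ is central). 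What your approach buys: it is shorter, representation-independent, generalizes verbatim to any group with $G'=Z(G)$ of prime order, and makes the non-emptiness test for the CDP remark trivial as well. What the paper's approach buys: its explicit exponent equations are exactly the subroutine that Algorithm \ref{csp-algo} invokes in the step ``equating expressions of the form \eqref{cp-expr}'', so the computational proof doubles as an algorithm specification. One caveat applies equally to both proofs: the theorem says ``any central product'', which under the paper's definition includes direct products, where $Z(G)$ is larger than $C_p$ and your coset description (and the paper's normal form) no longer applies; the paper quietly restricts to the amalgamated case via its ``without loss of generality'', and you make the same restriction but state it explicitly, so your scope matches the paper's.
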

\begin{proof}
 We first deal with the case $G=M(p)$.
 More specifically, write $g' =x^a y^b$ and $g = x^c y^d$. Then, from the proof of Theorem \ref{csp-extra1} we have
\begin{align*}
C_{g'} &= \{x^{a-p(aj-bi)}y^b \mid i,j \in \Z,\} \ \text{so} \\
    gC_{g'}& = \{(x^c y^d)(x^{a-p(aj-bi)} y^b) \mid i,j \in \Z\} \\
 & = \{ x^{c+a-p(aj-bi-ad)} y^{d+b}\} \mid i,j \in \Z\}
\end{align*}

Writing $g_i = x^{c_i}y^{d_i}$, $g_i' = x^{a_1}y^{b_1}$, $i=1,\ 2$. It is clear that if an element common to the sets $g_1C_{g_1'}$ and $g_2C_{g_2'}$ exists, then we must have $b_1+d_1 = b_2+d_2 \mod p$, $c_1+a_1 = c_2+a_2 \mod p$ and such an element may be found by solving
\begin{align*}
    \frac{1}{p}[(c_1+a_1)-(c_2+a_2)]=(a_1j_1-b_1i_1-a_1d_1)-(a_2j_2-b_2i_2-a_2d_2) \mod p
\end{align*}
for $i_1, i_2, j_1, j_2$. Clearly, this is a linear equation, so a solution can be found in time $\mathcal{O}(\log p)$.

Now, for $G= N(p)$, we write $g' =x^a y^b z^c$ and $g = x^d y^e z^f$. Then, from the proof of Theorem \ref{csp-extra2} we have
\begin{align*}
C_{g'} &= \{x^{a}y^{ka-ic+b}z^c \mid i,k \in \Z,\} \ \text{so} \\
    gC_{g'}& = \{(x^d y^e z^f)(x^{a}y^{ka-ic+b}z^c) \mid i,k \in \Z\} \\
    &= \{(x^{d+a} y^{e-fa+ka-ic+b}z^{f+c}) \mid i,k \in \Z\}
\end{align*}

Writing $g_i = x^{d_i}y^{e_i}z^{f^i}$, $g_i' = x^{a_i}y^{b_i}z^{c^i}$, $i=1,\ 2$, It is clear that if an element common to the sets $g_1C_{g_1'}$ and $g_2C_{g_2'}$ exists, then we must have $a_1+d_1 = a_2+d_2 \mod p$, $c_1+f_1 = c_2+f_2 \mod p$ and such an element may be found by finding $(i_1,j_1, i_2, j_2)$ satisfying
\begin{align*}
    \frac{1}{p}[(c_1+a_1)-(c_2+a_2)]=(a_1j_1-b_1i_1-a_1d_1)-(a_2j_2-b_2i_2-a_2d_2) \mod p
\end{align*}

Thus, $M(p)$ and $N(p)$ are both efficiently $C$-decomposable. 

For the statement on central products, first observe from that for a central product $G= H_1 \ldots H_r$, the conjugacy class $C^G_{g} $ of $g=h_1 \ldots h_r$ in $G$ can be written as the product $C^G_{g} = C_{h_1}\ldots C_{h_r}$ of conjugacy classes $C_{h_i}$ of $h_i$ in $H_i$. Further, 
for any $g' = h_1'\ldots h_r' \in G$, $g'C_g = h_1'C_{h_1} \ldots h_r'C_{h_r}$. 
Without loss of generality $H_i=M(p), 1\leq i \leq r$, $H_i=N(p), r+1\leq i \leq s+r$, so $x_1^p=x_2^p=\ldots =x_r^p = y_{r+1} = \ldots = y_{r+s}$ (in this case the nontrivial central product is uniquely defined by this condition).

Write $g = h_1\ldots h_{s+r}$, $g' = h_1'\ldots h_{s+r}'$, $h_i, h_i' \in H_i$. By the discussion above, there exist linear polynomials $A_i(s,t)$ and constants $B_i$ $1 \leq i\leq r$ and linear polynomials  $B_i(u)$ and constants $A_i, C_i$, $r+1 \leq i\leq r+s$ each determined by $h_i, h_i'$, such that \begin{align}\label{cp-expr}
    g'C_g = & h_1'C_{h_1} \ldots h_r'C_{h_r} \nonumber \\
    =& (x_1^{A_1(s_1,t_1)}y_1^{B_1})\ldots (x_r^{A_r(s_r,t_r)}y_r^{B_r} )(x_{r+1}^{A_{r+1}}y_{r+1}^{B_{r+1}(u)}z_{r+1}^{C_{r+1}}) \ldots (x_{r+s}^{A_{r+s}}y_{r+s}^{B_{r+s}(u)}z_{r+s}^{C_{r+s}}) \nonumber \\
    =& (x_1^{A_1(s_1,t_1)} \ldots x_r^{A_r(s_r,t_r)} x_{r+1}^{A_{r+1}}\ldots x_{r+s}^{A_{r+s}}) (y_1^{B_1}\ldots y_r^{B_r}y_{r+1}^{B_{r+1}(u)} \ldots y_{r+s}^{B_{r+s}(u)}) (z_{r+1}^{C_{r+1}} \ldots z_{r+s}^{C_{r+s}})
\end{align}
  Note that this expression is not unique, as the intersections $H_i \cap H_j, \ i \neq j$ are not empty. One may further reduce the $A_i(s_i,t_i)$'s by substituting  $x_1^p=x_2^p=\ldots =x_r^p = y_{r+1} = \ldots = y_{r+s}$, so without loss of generality we may assume that in the above expression, all exponents are less than $p$. Under this restriction, such an expression is unique. Now in order to find an element common to $g_1C_{g_1'}$ and $g_2C_{g_2'}$  one may equate the polynomials in the exponents of the $x_i$'s and $y_i$'s (note that all the exponents of the $z_i$'s are constants so they must be equal if the intersection is nonempty) and solving for the integers $s_i, t_i, u_i$ as done above, individually in each group. This requires the solution of at most $r+s$ linear equations $\mod p$, and thus has complexity $\mathcal{O}((r+s)\log p) = \mathcal{O}(\log |G|)$. Thus, a polynomial time solution exists, and $G$ is efficiently $C$-decomposable.
\end{proof}

\begin{theorem}
The CSP in an extraspecial $p$-group has a polynomial time solution.
\end{theorem}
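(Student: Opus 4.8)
The plan is to assemble the final statement from the three ingredients already in place: the structural fact that an extraspecial $p$-group is a central product of copies of $M(p)$ and $N(p)$, the efficient $C$-decomposability of such central products (Theorem \ref{desc-class}), and the reduction of the CSP across a central product supplied by Theorem \ref{central-thm}, bottoming out in the explicit solutions for the order-$p^3$ factors (Theorems \ref{csp-extra1} and \ref{csp-extra2}).

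First I would fix an extraspecial $p$-group $G$ of order $p^{1+2n}$ and invoke the structure theorem to write $G = H_1 H_2 \cdots H_n$ as a central product in which each $H_i$ is a copy of $M(p)$ or $N(p)$. To keep the argument algorithmic rather than merely existential, I would note that Theorem \ref{central-p3} produces such a central decomposition in Las--Vegas polynomial time (every extraspecial group is of class $2$), and that the remark following it lets us decide, in polynomial time, which isomorphism type each order-$p^3$ factor has by computing $p$-th powers of generators.

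Next I would establish, by induction on the number of factors $n$, the slightly more general claim that the CSP in \emph{any} central product $G = H_1 \cdots H_n$ of copies of $M(p)$ and $N(p)$ admits a polynomial-time solution. The base case $n=1$ is precisely Theorems \ref{csp-extra1} and \ref{csp-extra2}. For the inductive step I would regroup the factors as $G = H_1 K$ with $K = H_2 \cdots H_n$, observe that $G$ is the central product of the subgroups $H_1$ and $K$ (all factors pairwise commute, so any $g = h_1 h_2 \cdots h_n$ factors as $h_1 k$ with $k \in K$ commuting with $h_1$), and that $G$ is efficiently $C$-decomposable by Theorem \ref{desc-class}. Theorem \ref{central-thm} then reduces, in polynomial time, an instance of the CSP in $G$ to one instance in $H_1$ and one in $K$; the former is solved by the base case and the latter by the induction hypothesis, since $K$ is itself a central product of $n-1$ such factors. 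Accumulating costs, this is $O(n)$ applications of a polynomial-time reduction together with $n$ base-case solves, i.e. $O(\mathrm{poly}(n, \log p)) = O(\mathrm{poly}(\log \card{G}))$.

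The conceptual heart of the matter is already discharged in Theorem \ref{desc-class}, so the only genuine care required here is to confirm that the hypotheses of Theorem \ref{central-thm} hold at \emph{every} level of the recursion. The key point is that each intermediate factor $K = H_2 \cdots H_n$ is again a central product of copies of $M(p)$ and $N(p)$, hence again efficiently $C$-decomposable; this is immediate because Theorem \ref{desc-class} is stated for arbitrary finite central products of these order-$p^3$ groups, so the inductive invocation of Theorem \ref{central-thm} is legitimate at each step. I expect the main (and fairly minor) obstacle to be bookkeeping: checking that the central-product structure survives the regrouping, and that the repeated reductions compose to a genuinely polynomial-time procedure rather than inflating the instance size. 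The latter holds because each application of Theorem \ref{central-thm} outputs CSP instances whose descriptions remain bounded by $O(\log \card{G})$ bits.
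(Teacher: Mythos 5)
Your proposal is correct and follows essentially the same route as the paper: compute the central decomposition via Theorem \ref{central-p3}, invoke efficient $C$-decomposability (Theorem \ref{desc-class}), reduce via Theorem \ref{central-thm}, and bottom out in Theorems \ref{csp-extra1} and \ref{csp-extra2}. The only difference is presentational: you make the iterated two-factor reduction explicit as an induction on the number of factors, whereas the paper states the $r$-fold reduction directly and delegates that same peeling-off recursion to the while-loop of Algorithm \ref{csp-algo}.
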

\begin{proof}
 Let $G$ be an extra special $p$-group. By Theorem \ref{central-p3}, a decomposition of $G$ into finitely many (say, $r$-many) extraspecial groups of order $p^3$-can be done in polynomial time. By Theorem \ref{desc-class}, $G$ is efficiently $C$-decomposable. Applying Theorem \ref{central-thm}, the CSP in $G$ reduces in polynomial time to $r$ CSP's in extraspecial $p^3$ groups, which by Theorems \ref{csp-extra1} and \ref{csp-extra2}, have polynomial time solutions. In Algorithm \ref{csp-algo}, we outline the step-by-step procedure explicitly. \end{proof}

\begin{algorithm}[ht]  
\hspace*{\algorithmicindent} \textbf{Input} {Generators of an extraspecial $p$-group $G$, conjugate elements $\Tilde{g}, g' \in G$} \\
\hspace*{\algorithmicindent}
\textbf{Output} {A conjugator $g\in G$ such that $g^{-1}\Tilde{g}g=g'$}
\begin{algorithmic}[1]
\STATE Use the algorithm of \cite{Wilson+2009+813+830} to obtain a central product decomposition of $G$ into finitely many extraspecial groups of order $p^3$. Without loss of generality $G= H_1 \ldots H_{r+s}$, with $H_i=M(p), 1\leq i \leq r$, $H_i=N(p), r+1\leq i \leq s+r$.
\STATE Write $g=h_1 \ldots h_{r+s}$, set $j\leftarrow 1$.
\STATE  \textbf{while $j\leq r+s$}
\begin{enumerate}
\item $\bar{K}\leftarrow H_{j+1} \ldots H_{r+s}$,  $\bar{H}\leftarrow H_j$. Write $g'=h'k',\Tilde{g}=\Tilde{h}\Tilde{k'}, h', h\in \bar{H}, k',k\in \bar{K}$.
\item Find an element $t_j \in h'^{-1}C_{\Tilde{h}} \cap k'C_{\Tilde{k}} $ by equating expressions of the form \eqref{cp-expr}. 
\item Solve for $h_j$ by solving the CSP $h_j^{-1}h_j'h_j = h_j't_j$ in $H_j$, using theorems \ref{csp-extra1} and \ref{csp-extra2}.
\item $j \leftarrow j+1$.
\end{enumerate}
 \caption{Solving the CSP in an extraspecial $p$-group } \label{csp-algo}
 \end{algorithmic}
\end{algorithm}

\begin{remark}
By the proofs of Theorems 5 and 6, one easily concludes that the CDP also has a polynomial time solution in each of $M(p)$ and $N(p)$. Further, from the proof of Theorem 1, solving the CDP in a central product $HK$ is equivalent to checking if the intersection $h'^{-1}C_{\Tilde{h}} \cap k'C_{\Tilde{k}^{-1}}$ is empty or not. Again, it is clear from the proof of Theorem 7 that this can be checked in polynomial time for any extraspecial $p$-group. Thus, Algorithm 1 is easily modified to a polynomial time algorithm for the CDP in an extraspecial $p$-group, and we have the following result.
\end{remark}
\begin{theorem}
The Conjugacy Decision Problem has a polynomial time solution in an extraspecial $p$-group.
\end{theorem}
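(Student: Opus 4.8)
The plan is to observe that the decision problem is no harder than the search problem already solved, because at every stage of that solution the existence of a conjugator is equivalent to the solvability of a system of linear congruences, and solvability can be tested \emph{without} exhibiting a solution. I would therefore organize the argument to mirror Algorithm \ref{csp-algo}, replacing each ``solve'' step by a ``test for solvability'' step.

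First I would dispatch the two base cases. For $M(p)$ and $N(p)$, Theorems \ref{csp-extra1} and \ref{csp-extra2} characterize conjugacy of $g$ and $g'$ by explicit congruence conditions (equality of certain exponents modulo $p$) together, in the $N(p)$ case, with the solvability of a single linear equation $B - b = ka - ic \pmod p$. Each such condition is either a direct congruence check or, by the number-theoretic lemma on modular linear equations, a gcd-divisibility test; all are verifiable in time $\mathcal{O}(\log p)$. This settles the CDP in the groups of order $p^3$.

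Next I would handle the reduction. From the proof of Theorem \ref{central-thm}, two elements $\tilde g = \tilde h \tilde k$ and $g' = h'k'$ of a central product $G = HK$ are conjugate if and only if $h'^{-1} C_{\tilde h} \cap k' C_{\tilde k^{-1}} \neq \emptyset$; thus the decision problem in $G$ reduces exactly to an emptiness test for this intersection, with no need to produce a conjugator. The remaining point is that this test is efficient, and here I would invoke the explicit description of such intersections obtained in the proof of Theorem \ref{desc-class}: for a central product of copies of $M(p)$ and $N(p)$, a set of the form $g'C_g$ is given by expression \eqref{cp-expr}, in which the exponents of the central generators are fixed constants while the remaining exponents are values of linear polynomials. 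Deciding whether two such sets meet then amounts to checking that the constant exponents agree modulo $p$ and that the induced system of linear congruences is solvable, which is again a finite collection of gcd-divisibility tests, each costing $\mathcal{O}(\log p)$.

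Finally I would assemble these pieces as in the proof of Theorem \ref{desc-class}: apply the decomposition of Theorem \ref{central-p3} to write $G$ as a central product of at most $\mathcal{O}(\log|G|)$ extraspecial factors of order $p^3$ in polynomial time, and then carry out the emptiness test above, which decomposes into at most $r+s$ linear congruence solvability checks, for a total cost of $\mathcal{O}((r+s)\log p) = \mathcal{O}(\log|G|)$. I do not anticipate a serious obstacle: the only conceptual subtlety is that efficient $C$-decomposability \emph{as stated} only produces an element when the intersection is nonempty, whereas the decision problem requires the \emph{emptiness} of that intersection to be decidable. This gap is closed by noting that the underlying system is linear modulo $p$ and invoking the ``if and only if'' direction of the number-theoretic lemma rather than merely its constructive direction.
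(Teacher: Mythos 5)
Your proposal is correct and follows essentially the same route as the paper, which establishes this result in the remark preceding the theorem: the CDP in $M(p)$ and $N(p)$ follows from the conjugacy criteria of Theorems \ref{csp-extra1} and \ref{csp-extra2}, the CDP in a central product reduces to testing emptiness of $h'^{-1}C_{\tilde h} \cap k'C_{\tilde k^{-1}}$ by the proof of Theorem \ref{central-thm}, and that test is carried out via the explicit linear descriptions in the proof of Theorem \ref{desc-class}. Your closing observation --- that deciding emptiness requires the solvability (gcd-divisibility) direction of the modular-equation lemma rather than the constructive guarantee in the definition of efficient $C$-decomposability --- is a point the paper glosses over, and handling it explicitly is a welcome refinement rather than a departure.
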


Below, we describe an example of an application of our method for the CSP in the extraspecial group of order $p^5$.

\begin{example}
We consider the central product of $M(p)$ and $N(p)$, which is the extraspecial group of order $p^{5}$. Write \begin{align*}
    \Tilde{g} = \Tilde{h}\Tilde{k}, \
    g' = h'k'
\end{align*}
\begin{enumerate}
    \item We find an element $t \in h'^{-1}C_{\Tilde{h}} \cap k'C_{\Tilde{k}} $. Write $\Tilde{h}= x_1^{a_1}y_1^{b_1}$, $h'^{-1} = x_1^{c_1}y_1^{d_1}$, $\Tilde{k}^{-1}= x_2^{a_2}y_2^{b_2}z_2^{c_2}$ (so $\Tilde{k}=x_2^{-a_2}y_2^{-c_2a_2-b_2}z_2^{-c_2}$), $k' = x_2^{d_2}y_2^{e_2}z_2^{f_2}$. We have \begin{align*}
   h'^{-1}C_{\Tilde{h}} = \{x_1^{c_1+a_1-p(a_1j-b_1i-a_1d_1)}y^{d_1+b_1}\} \\
   k'C_{\Tilde{k}}=\{x_2^{d_2+a_2}y_2^{e_2-f_2a_2 + ka_2-ic_2+b_2}z_2^{f_2+c_2}\}
\end{align*}
Now, we have $x_1^p = y_2$ and $H\cap K = \langle x_1^p \rangle = \langle y_2 \rangle $. 
For the intersection to be nonempty we have \begin{align*}
  c_1+a_1 = 0 \mod p, \quad d_2+a_2 = 0 \mod p \\
d_1+b_1 =0\mod p,  \quad f_2+c_2 = 0 \mod p \\
  a_1j_1-b_1i_1-a_1d_1 =e_2-f_2a_2+k_2a_2-i_2c_2+b_2 \mod p
\end{align*}

The last of these equations has an easy solution: if at least one of $a_1, b_1, a_2, c_2 \neq 0$, we set all the remaining coefficients to zero and solve for the remaining one. Let $i_1$ and $j_1$ be the chosen values. Then $t = x_1^{{c_1+a_1}-p(a_1j-b_1i-a_1d_1)}$. 
\item We now solve the CSP $h^{-1}\Tilde{h} h = h't \in H $. We have \begin{align*}
    h't &= x^{-c_1(1+p)^{-d_1}}y^{-d_1}x_1^{{c_1+a_1}-p(a_1j-b_1i-a_1d_1)} \\ &= x^{-c_1(1+p)^{-d_1}} x^{[c_1+a_1-p(a_1j_1-b_1i_1-a_1d_1)](1+p)^{-d_1}}y^{-d_1}\\ & = x_1^{(1-d_1p)[a_1-p(a_1j_1-b_1i_1-a_1d_1)]}y_1^{-d_1}.
\end{align*} Write $A_1 =(1-d_1p)[a_1-p(a_1j_1-b_1i_1-a_1d_1)]$. Now, $\Tilde{h}$ and $h'$ are conjugates, so $A_1 = a_1 \mod p$, and a conjugator $h$ is given by $h=x_1^{i}y_1^{j}$ where $ib_1-ja_1 = (A_1-a_1)/p= -a_1j_1+b_1i_1$.

\item We similarly solve $ (k^{-1}\Tilde{k} k) = k' t^{-1} = x_2^{d_2}y_2^{e_2}z_2^{f_2} (x_1^{-c_1-a_1+p(a_1j-b_1i-a_1d_1)})$ \\ $=x_1^{-c_1-a_1+p(a_1j-b_1i-a_1d_1+e_2)}x_2^{d_2}z_2^{f_2}=y_2^{-(c_1+a_1)/p +(a_1j_1-b_1i_1-a_1d_1+e_2)}$ for $k$. Recall that $\Tilde{k}=x_2^{-a_2}y_2^{-c_2a_2-b_2}z_2^{-c_2}$. As required we already have $a_2 + d_2 = 0 \mod p, f_2 +c_2 =0 \mod p$. Write $B=-(c_1+a_1)/p +(a_1j_1-b_1i_1-a_1d_1+e_2)$, $b= -c_2a_2-b_2, k=x_2^{I}y_2^Jz_2^{J}$. We then must have $B-b=Ka-Ic$, and $J$ can take any value.
\item The final solution is given by $g=hk$.\end{enumerate}

We now illustrate this situation with the help of a numerical example. Take \begin{align*} p= 29, \ a_1=14, \ c_1=15, \ b_1 = 2, b_2=4, \ d_1=27, \ a_2 = 7, \ d_2 = 22, \ c_2= 6, \ f_2 = 23, e_2=5 \end{align*}

We have $$\Tilde{h}=x_1^{14}y_2^{2}, \ \Tilde{k}=x_1^{22}y_2^{12}z_2^{23}, \  h'=x_1^{797}y_1^2, \ k'=x_2^{22}y_2^{5}z_2^{23}.$$
 Here, $i_1$ and $j_1$ must satisfy $14j_1-2i_1=1 \mod 29$ so we take $(i_1, j_1) = (15,0)$. Then, $t= x_1^{29}$ and $h't = x_1^{826} y_1^2$. 
 
 Now for $h=x_1^i y_1^j$ one requires $a_1j-b_1i=(826-14)/29$, so we may take $(i,j) = (14,0)$. Then, $h=x_1^{14}$ satisfies $h^{-1}\Tilde{h}h=h't$. 
 
 Similarly, we have $k't^{-1} = x_2^{22}y_2^{5}z_2^{23}x_1^{-29} = x_2^{22}y_2^{4}z_2^{23} $. Recall $\Tilde{k} = x_2^{-a_2}y_2^{-c_2a_2-b_2}z_2^{-c_2}= x_2^{22} y_2^{12}z_2^{23}$. Now for $k=x_2^I y_2^J z_2^K$ one requires $Ka_2-Ic_2 = 4-12 = 21 \mod 29 $, so we may take $(I,J, K) = (0,0,26)$. Then, $k=z_2^{26}$ satisfies $k^{-1}\Tilde{k}k=k'$. 
Thus, the final conjugator is $g= hk = x_1^{14}z_2^{26}$: \begin{align*}
    g^{-1}\Tilde{g}g = (h^{-1}\Tilde{h}h)(k^{-1}\Tilde{k}k) = (x_1^{826}y_1^{2})(x_2^{22} y_2^4 z_2^{23}) &= (x_1^{14}y_1^{2})(x_2^{22}y_2^{3}z_2^{23}) \\
    g'=h'k' = (x_1^{797}y_1^{2})(x_2^{22}y_2^5z_2^{23}) &= (x_1^{14}y_1^{2})(x_2^{22}y_2^{3}z_2^{23}).
\end{align*}
\end{example}

\section{Conclusion}

The algorithmic complexity of the CSP in different classes of nonabelian groups is a problem pertinent to the field of group-based cryptography. In this paper, we demonstrated a polynomial time solution for the CSP in an extraspecial $p$-group. For this, we first reduced the CSP in extraspecial groups of order $p^3$ to the solution of a set of linear modular equations. Next, we provided a method for reducing CSPs in certain central products, and used this to extend the result to all extraspecial $p$-groups. A direct conclusion from the results in this paper is that nonabelian groups with extraspecial $p$-groups as direct or central components are unsuitable as cryptographic platforms. Further, central products with the efficient $C$-decomposability property that we introduce must be avoided, and a chosen platform must in a way be ``atomic". This is practically relevant even beyond just $p$-groups, since several nonabelian groups are constructed by combining smaller groups by taking direct, central, and semidirect products (see, for example, \cite{caranti2013}, \cite{curran}). Finally, our results also show a polynomial time solution for the CDP in any extraspecial $p$-group.

\bibliographystyle{plain}
\bibliography{references}

\end{document}